\documentclass[11pt]{article}
\usepackage[margin=1.08in]{geometry}
\usepackage{amsmath,amssymb,amsthm,mathtools}
\usepackage{microtype}
\usepackage{xcolor}
\usepackage{enumitem}
\usepackage[colorlinks=true,linkcolor=blue!55!black,citecolor=blue!55!black,urlcolor=blue!55!black]{hyperref}
\usepackage[T1]{fontenc}
\usepackage{lmodern}

\newtheorem{theorem}{Theorem}[section]
\newtheorem{proposition}[theorem]{Proposition}
\newtheorem{lemma}[theorem]{Lemma}
\newtheorem{corollary}[theorem]{Corollary}

\newcommand{\Pp}{\mathbb P}

\title{The Sharp R\'enyi and Tsallis Threshold \\ in the Shepp--Olkin Concavity Problem}
\author{Haoran Wang\\\small Independent Researcher\\
\small\texttt{whr.hrwang@gmail.com}}
\date{}

\begin{document}
\maketitle

\begin{abstract}
Let $B_1,\ldots,B_n$ be independent Bernoulli random variables with parameters $p_1,\ldots,p_n$, and let $S=\sum_i B_i$. Hillion and Johnson proved that the Shannon entropy of $S$ is jointly concave in the parameter vector and conjectured analogous critical ranges for R\'enyi and Tsallis entropies. We determine the exact range. For every real $0<q<1$, the power sum $\sum_k \Pp(S=k)^q$ is jointly concave, and it is strictly concave on the open parameter cube. Hence both the R\'enyi and Tsallis entropies of order $q$ are jointly concave. At $q=1$ this is the known Shannon theorem. For every real $q>1$, a two-variable transverse interpolation gives strict local convexity for both entropies. Thus the universal concavity range is exactly $0<q\le 1$ for both families. The proof below order one uses the transport inequality of Hillion and Johnson together with a nonlinear telescoping correction. A Riccati identity reduces the remaining local estimate to a two-by-two determinant whose zeros have a one-sided crossing rule.
\end{abstract}

\section{Introduction}

Entropy under convolution measures how uncertainty changes when independent random inputs are combined. For sums of independent Bernoulli variables, the Shepp--Olkin problem asks for a stronger geometric property: whether this uncertainty is jointly concave in the individual success probabilities.

Let
\[
B_i\sim \operatorname{Bernoulli}(p_i),\qquad i=1,\ldots,n,
\]
be independent, and write
\[
S=\sum_{i=1}^n B_i,\qquad f_k=\Pp(S=k).
\]
The classical Shepp--Olkin conjecture asserts that the Shannon entropy of $S$ is jointly concave in $(p_1,\ldots,p_n)$. Thus concavity must hold along every affine direction in the parameter cube, including directions in which some Bernoulli parameters increase while others decrease. Hillion and Johnson proved the conjecture in full \cite{HJ17}, building on their discrete transport framework and their earlier proof for monotone interpolations \cite{HJ16}.

R\'enyi and Tsallis entropies give two standard one-parameter extensions of Shannon entropy \cite{Ren61,Tsa88}. For $q>0$, $q\ne1$, set
\[
H_q(S)=\frac{1}{1-q}\log\sum_k f_k^q,
\qquad
T_q(S)=\frac{1-\sum_k f_k^q}{q-1}.
\]
Both converge to Shannon entropy as $q\to1$. Hillion and Johnson proposed a generalized Shepp--Olkin conjecture with critical orders. They conjectured the R\'enyi threshold to be $2$ and the Tsallis threshold to be the root $q\approx3.65986$ of $2^q-4q+2=0$ \cite{HJ17}.

Both thresholds are in fact equal to $1$. Below the Shannon order, the common power sum $\sum_k f_k^q$ is jointly concave for every $0<q<1$. Above it, joint concavity already fails for two Bernoulli variables along a direction in which their parameters move oppositely. Thus $q=1$ is the sharp boundary between joint concavity and transverse instability.

\subsection{Prior work}

Shepp and Olkin formulated the joint entropy-concavity problem for sums of independent Bernoulli variables in 1981 \cite{SO81}. Hillion and Johnson later developed a discrete transport method and proved joint concavity along monotone interpolations \cite{HJ16}. They then established full joint concavity for Shannon entropy \cite{HJ17}.

The 2017 paper \cite{HJ17} also formulated the R\'enyi and Tsallis extension considered here. Its proof provides two ingredients that we use: a transport representation for the first two derivatives of a Poisson-binomial mass function, and a sharp local bound on the second transport term. The same paper observed that adding a telescoping discrete derivative might help with generalized entropies; see Remark 4.5 of \cite{HJ17}. Our argument develops this suggestion in a nonlinear form. The new ingredients are an explicit correction valid for every $0<q<1$, together with a Riccati identity and a zero-crossing argument that establish the required local positivity.

Hillion and Johnson subsequently proved the Shepp--Olkin monotonicity conjecture for Shannon entropy \cite{HJ19}. They also studied R\'enyi and Tsallis analogues of this monotonicity question. This provides a useful contrast with joint concavity. Their monotonicity problem concerns parameter changes toward fairer coins, whereas joint concavity must control every affine direction, including mixed-sign ones. This distinction is relevant here: our counterexample above order one is transverse, with two parameters moving in opposite directions.

There is also a separate literature on R\'enyi entropy of Bernoulli sums. Madiman, Melbourne and Roberto obtained sharp Fourier-analytic entropy--variance comparisons and applications to entropy power and anti-concentration \cite{MMR23}. Those results concern different functionals and do not address joint concavity in the full Bernoulli parameter vector.

\subsection{Our results}

The main result is the following sharp classification.

\begin{theorem}\label{thm:main}
For every $n\ge1$, let $S=\sum_{i=1}^n B_i$ be a sum of independent Bernoulli variables. The following statements hold for real $q>0$.
\begin{enumerate}[label=\textup{(\roman*)}]
\item If $0<q<1$, then
\[
(p_1,\ldots,p_n)\longmapsto \sum_k \Pp(S=k)^q
\]
is jointly concave on $[0,1]^n$ and strictly concave on $(0,1)^n$. Consequently, both $H_q(S)$ and $T_q(S)$ are jointly concave, and they are strictly concave on $(0,1)^n$.
\item At $q=1$, both families reduce to Shannon entropy, which is jointly concave by Hillion and Johnson \cite{HJ17}.
\item If $q>1$, then joint concavity fails already for the sum of two Bernoulli variables. More precisely, there is an interior affine path on which both $H_q$ and $T_q$ have strictly positive second derivative at its midpoint.
\end{enumerate}
Hence the universal joint-concavity range for both R\'enyi and Tsallis entropy is exactly
\[
\boxed{\ 0<q\le1.\ }
\]
\end{theorem}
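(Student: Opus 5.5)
The proof splits along the three items of Theorem~\ref{thm:main}. Item (ii) is the Hillion--Johnson theorem \cite{HJ17}. For item (i), the consequences for $H_q$ and $T_q$ follow from concavity of $F_q(p)=\sum_k f_k^q$. Since $1-q>0$, $T_q=(F_q-1)/(q-1)$ is a positive multiple of $F_q$ plus a constant. Also $H_q=\frac{1}{1-q}\log F_q$ is a nondecreasing concave function of the positive concave function $F_q$. Strictness passes through both compositions. The same reasoning handles only $q<1$. For $q>1$ one argues directly through a second derivative, which is item (iii).

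\textbf{Item (iii).} I would take $n=2$ and the affine path $p_1=\tfrac12+at$, $p_2=\tfrac12-bt$ (transverse, opposite signs), expanding around $t=0$. Then $f_0=(1-p_1)(1-p_2)$, $f_1=p_1+p_2-2p_1p_2$, $f_2=p_1p_2$. At the midpoint the values are $f=(1/4,1/2,1/4)$. The derivatives are $f'=\tfrac12(b-a,0,a-b)$ and $f''=(2ab,-4ab,2ab)$.

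Hence
\[
\frac{d^2}{dt^2}F_q=q(q-1)\sum_k f_k^{q-2}(f_k')^2+q\sum_k f_k^{q-1}f_k''.
\]
This equals
\[
q(q-1)\,2\cdot4^{2-q}\tfrac{(a-b)^2}{4}+q\,ab\,\bigl(4\cdot4^{1-q}-4\cdot2^{1-q}\bigr).
\]
For $q>1$ the last bracket is negative. Taking $ab<0$ makes the second term positive, and the first term is already positive. So $F_q''>0$, and a fortiori $T_q''>0$.

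For $H_q$,
\[
H_q''=\frac{1}{1-q}\Bigl(\frac{F_q''}{F_q}-\frac{(F_q')^2}{F_q^2}\Bigr).
\]
Because the chosen point is symmetric under $k\mapsto 2-k$, with $f_0'=-f_2'$ and $f_0=f_2$, we get $F_q'=0$ there. Thus $H_q''=F_q''/((1-q)F_q)$, which is the wrong sign. I therefore need the opposite sign for $F_q''$. So I choose $ab>0$, for instance $a=b$. Then the first term vanishes and $F_q''=q a^2\,4(4^{1-q}-2^{1-q})<0$, which gives $H_q''>0$.

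For $T_q=(1-F_q)/(q-1)$ I also need $F_q''<0$, so the same direction $a=b$ works. This is in fact the transverse direction: $p_1$ increases while $p_2$ decreases. One midpoint and one direction therefore serve both entropies.

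\textbf{Item (i).} This is the main obstacle. The goal is to show, for every affine direction, that $\frac{d^2}{dt^2}\sum_k f_k^q<0$ on the open cube, and then extend to the closed cube by continuity. I would use the Hillion--Johnson transport representation from \cite{HJ17}. There the derivatives are written as $f_k'=g_{k-1}-g_k$ and $f_k''=h_{k-2}-2h_{k-1}+h_k$, with an explicit local bound $h_k^2\le$ a quantity controlled by $g$ and $f$. Substituting, the second derivative becomes
\[
q\sum_k\Bigl[(q-1)f_k^{q-2}(g_{k-1}-g_k)^2+f_k^{q-1}\Delta^2 h_k\Bigr].
\]
Summation by parts moves $\Delta^2$ onto $f_k^{q-1}$. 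Following Remark 4.5 of \cite{HJ17}, I would add a telescoping term $\sum_k \Delta(\phi_k)$, which sums to zero. Here $\phi_k$ is a nonlinear expression in $f_k,f_{k+1},g_k$ chosen so that the sum becomes a sum over $k$ of local quadratic forms in $(g_k,h_k)$.

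Nonpositivity of each local form reduces to a $2\times2$ determinant condition in the ratio $r=f_{k+1}/f_k$ and in $q$. Choosing $\phi_k$ via a Riccati-type ODE in $q$ should make the determinant factor. I would then show its zero set in $(r,q)$ is crossed only one way as $q$ increases. Since at $q=1$ the form is negative semidefinite by the Shannon case, it stays so for all $q<1$.

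Strictness on $(0,1)^n$ then follows because equality forces $g\equiv0$ and $h\equiv0$, which corresponds to the zero direction. Making the correction term and the crossing rule work is the hard step I expect to consume most of the effort.
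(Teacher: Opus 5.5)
Your item (iii) fails because of a sign error, and the missing idea there is an off-center base point. With $p_1=\tfrac12+at$ and $p_2=\tfrac12-bt$ you get $f_0=\tfrac14+\tfrac{b-a}{2}t-abt^2$, $f_1=\tfrac12+2abt^2$ and $f_2=\tfrac14+\tfrac{a-b}{2}t-abt^2$. So $f''=(-2ab,4ab,-2ab)$, not $(2ab,-4ab,2ab)$. For your final choice $a=b$ this gives $F_q''(0)=4qa^2\bigl(2^{1-q}-4^{1-q}\bigr)>0$ for $q>1$. Hence $T_q''(0)=-F_q''(0)/(q-1)<0$ and $H_q''(0)=F_q''(0)/((1-q)F_q(0))<0$, so both entropies are strictly concave along your path.

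No other direction at $(\tfrac12,\tfrac12)$ rescues it. There the gradient of $F_q$ vanishes, and its Hessian has eigenvalue $4q\,2^{-q}(1-2^{1-q})$ on $(1,-1)$ and $4q\,4^{-q}(4q-2-2^q)$ on $(1,1)$. Both are positive for $1<q<q_0\approx3.66$, where $q_0$ is the root of $2^q-4q+2=0$ that Hillion--Johnson conjectured as the Tsallis threshold. So the symmetric point is a strict local maximum of $T_q$ and $H_q$ throughout that range.

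The fix is to keep your transverse direction and your evenness observation but move the base point. For $p_1=p+t$, $p_2=p-t$, every $f_k$ is even in $t$, so $F_q'(0)=0$ for every $p$, and
\[
F_q''(0)=-2q\Bigl((1-p)^{2q-2}+p^{2q-2}-2\bigl(2p(1-p)\bigr)^{q-1}\Bigr).
\]
This is negative for small $p$, since the bracket tends to $1$ as $p\to0$. The paper takes $p=1/(1+2^{q/(q-1)})$, where it equals $-2qp^{2q-2}$.

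Item (i) matches the paper's architecture only at the level of headings, and several ingredients are misdescribed. The Hillion--Johnson input is the one-sided linear bound $h_k(f_{k+1}^2-f_kf_{k+2})\le 2f_{k+1}g_kg_{k+1}-f_{k+2}g_k^2-f_kg_{k+1}^2$, not a bound on $h_k^2$. It is usable only because, after summing by parts, $h_k$ enters $-F_q''/(s(1-s))$ (with $s=1-q$) with coefficient $-(f_k^{-s}-2f_{k+1}^{-s}+f_{k+2}^{-s})/s$, which is strictly negative by log-concavity. Substituting the bound eliminates $h_k$. The local forms therefore live in $(g_k,g_{k+1})$, not $(g_k,h_k)$, and by homogeneity they depend on two ratios $f_k/f_{k+1}$ and $f_{k+2}/f_{k+1}$ with product less than $1$, not on one ratio.

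The substance of (i) is an explicit correction (the paper's $\eta_s$) plus a proof that the resulting determinant is positive. Your sketch supplies neither. In particular, the determinant tends to $0$ as $ac\uparrow1$ for every $s$, so any no-crossing argument must first fix its sign just inside that boundary. The paper does this with a first-order expansion in which the reciprocal identity for $A_s$ cancels the $1/\delta$ pole and the Riccati identity (in the ratio variable, not in $q$) cancels the constant term. Only then does it run the one-sided crossing rule in $\log a$ at fixed $c$ and $s$.

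Your continuation in $q$ from the Shannon case would need the same boundary control at every $q$. It would also need a check that your specific corrected forms are definite at $q=1$, and a sign rule for the $q$-derivative of the determinant at its zeros; none of this is indicated. Finally, your strictness claim needs the computation that $g\equiv0$ forces $h_0=-f_0\sum_i b_i^2/(1-p_i)^2<0$, which is how the paper handles that kernel.
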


The negative part has a short exact witness in two variables, while the positive part is the main argument of the paper. In particular, the two-variable obstruction embeds into every higher-dimensional parameter cube, so joint concavity on $[0,1]^n$ fails for every $n\ge2$ when $q>1$.

\begin{corollary}
The critical orders in the generalized Shepp--Olkin problem of \cite{HJ17} are
\[
q_R^*=q_T^*=1.
\]
\end{corollary}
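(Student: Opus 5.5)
The corollary is a direct reformulation of Theorem~\ref{thm:main}, so the plan is to read off the critical orders from the three parts of that theorem. In the problem of \cite{HJ17}, the critical order $q_R^*$ (respectively $q_T^*$) is the supremum of those $q>0$ for which $H_q(S)$ (respectively $T_q(S)$) is jointly concave in $(p_1,\ldots,p_n)$ on $[0,1]^n$ for every $n\ge1$. The claim $q_R^*=q_T^*=1$ therefore splits into a lower bound, which says that concavity holds on all of $(0,1]$, and an upper bound, which says that concavity fails for every $q>1$.

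For the lower bound, fix $0<q<1$. Part (i) gives joint concavity of the power sum $\Phi_q=\sum_k f_k^q$, and it also asserts concavity of $H_q$ and $T_q$. It is still worth recording the short deduction. First, $T_q=(1-\Phi_q)/(q-1)=(\Phi_q-1)/(1-q)$ is a positive multiple of $\Phi_q$ minus a constant, so it is concave. Second, $H_q=\frac{1}{1-q}\log\Phi_q$ is a positive multiple of $\log$ composed with the positive concave function $\Phi_q$. Since $\log$ is concave and nondecreasing, this composition is concave, and the same composition argument transfers strict concavity. Part (ii) covers $q=1$ through the Shannon theorem of \cite{HJ17}. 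Hence both families are jointly concave for all $q\in(0,1]$, which gives $q_R^*\ge1$ and $q_T^*\ge1$.

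For the upper bound, fix $q>1$. Part (iii) provides, for $n=2$, an interior affine path along which $H_q$ and $T_q$ both have strictly positive second derivative at the midpoint, so neither is concave there. For arbitrary $n\ge2$ the plan is to embed this counterexample by setting $p_3=\cdots=p_n=0$. Then $S$ has the same law as $B_1+B_2$, and the affine path extends affinely, so the failure of concavity persists. If interior points are preferred, one instead sets the extra parameters to a small $\varepsilon>0$; the second derivative is continuous in $\varepsilon$, so it stays strictly positive. Thus no $q>1$ belongs to the concavity range, which gives $q_R^*\le1$ and $q_T^*\le1$.

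Combining the two bounds yields $q_R^*=q_T^*=1$. The only point that needs any care is the concavity of $H_q$ as the composition of $\log$ with $\Phi_q$. Because this is standard, I expect no real obstacle: all the substance of the corollary lies in Theorem~\ref{thm:main} itself.
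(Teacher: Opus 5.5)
Your proposal is correct and is essentially the paper's own argument: the corollary is just Theorem~\ref{thm:main} restated in terms of critical orders. Parts (i)--(ii) give joint concavity of $H_q$ and $T_q$ for $0<q\le 1$ (via $T_q=(\Phi-1)/(1-q)$ and composition with the concave increasing $\log$, as the paper does after Lemma~\ref{lem:matrix}), part (iii) rules out every $q>1$, and your embedding of the two-coin witness into $[0,1]^n$ by freezing the remaining parameters is the remark the paper makes right after the theorem, so nothing is missing.
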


\subsection{Technique overview}

Fix an affine parameter path $p_i(t)$ and denote the corresponding mass function by $f_k(t)$. Following \cite{HJ16,HJ17}, the first- and second-order changes of this law can be encoded by transport coefficients $g_k$ and $h_k$ satisfying
\[
f_k'=g_{k-1}-g_k,
\qquad
f_k''=h_k-2h_{k-1}+h_{k-2}.
\]
Thus the parameter derivatives become discrete divergences on the probability axis. Hillion and Johnson proved the local transport inequality
\[
h_k(f_{k+1}^2-f_kf_{k+2})
\le 2f_{k+1}g_kg_{k+1}-f_{k+2}g_k^2-f_kg_{k+1}^2.
\]
For $q=1-s$ with $0<s<1$, the second derivative of the power sum is a sum of local quadratic terms in $(g_k,g_{k+1})$ and $h_k$. A direct termwise estimate is too strong. We add a telescoping correction
\[
\eta_s(a,b)=\frac{1-s}{s}\,
\frac{(a^{-s}-b^{-s})^2}{b^{1-s}-a^{1-s}}.
\]
After the correction and the transport bound, each local term is bounded below by a two-by-two quadratic form. The determinant of this form is the only remaining issue.

The determinant is governed by a one-variable function $A_s$, chosen so that the endpoint coefficients and the local matrix can be handled in the same notation. This function satisfies an exact Riccati identity together with a reciprocal relation under $x\mapsto x^{-1}$. The reciprocal relation controls the degenerate boundary, while the Riccati identity fixes the orientation of any hypothetical interior zero. Since the determinant is positive just inside the boundary and every interior zero could only be crossed in the opposite direction, no such zero exists. This proves local positive definiteness for the whole interval $0<s<1$.

The argument above order one is different. A transverse two-coin interpolation keeps the first derivative of the probability vector equal to zero at its midpoint. Its second derivative has the wrong sign for every $q>1$.

\section{Preliminaries}

We work first at an interior point $(p_1,\ldots,p_n)\in(0,1)^n$ and along an affine path
\[
p_i(t)=p_i+t b_i.
\]
Boundary points follow by continuity.

For each $i$, let $f_k^{(i)}(t)$ denote the probability that the sum with $B_i$ removed equals $k$. Likewise, for $i\ne j$, let $f_k^{(i,j)}(t)$ be the mass function after removing both $B_i$ and $B_j$. Define
\[
g_k(t)=\sum_i b_i f_k^{(i)}(t),
\qquad
h_k(t)=\sum_{i\ne j} b_i b_j f_k^{(i,j)}(t),
\]
where the second sum is over ordered pairs $i\ne j$. We suppress the dependence on $t$ from now on and let all these sequences vanish outside their natural support. Direct differentiation then gives
\begin{equation}\label{eq:transport-derivatives}
f_k'=g_{k-1}-g_k,
\qquad
f_k''=h_k-2h_{k-1}+h_{k-2}.
\end{equation}

Along a general affine direction, the second-order coefficient $h_k$ has no fixed sign. The key input from Hillion and Johnson is that strict log-concavity of the Poisson-binomial law allows $h_k$ to be controlled by the neighboring first-order coefficients $g_k$ and $g_{k+1}$.

\begin{proposition}[Hillion--Johnson \cite{HJ17}]\label{prop:HJ}
For $0\le k\le n-2$,
\begin{equation}\label{eq:HJ}
h_k(f_{k+1}^2-f_kf_{k+2})
\le
2f_{k+1}g_kg_{k+1}-f_{k+2}g_k^2-f_kg_{k+1}^2.
\end{equation}
\end{proposition}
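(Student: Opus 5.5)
The plan is to verify \eqref{eq:HJ} directly as a quadratic inequality in the direction vector $b=(b_1,\ldots,b_n)$, at a fixed interior point $p\in(0,1)^n$ and a fixed index $k$. Write $D_k=f_{k+1}^2-f_kf_{k+2}$. This is strictly positive on $(0,1)^n$ by Newton's inequalities, because the generating polynomial $P(z)=\prod_i(1-p_i+p_iz)$ has only real roots. Both sides of \eqref{eq:HJ} are quadratic forms in $b$:
\begin{itemize}
\item the right-hand side is $-(g_k,g_{k+1})M(g_k,g_{k+1})^{T}$, where $M=\begin{pmatrix} f_{k+2} & -f_{k+1}\\ -f_{k+1} & f_k\end{pmatrix}$;
\item the left-hand side is $D_k\sum_{i\ne j}b_ib_jf^{(i,j)}_k$.
\end{itemize}
The claim is therefore that the symmetric matrix
\[
Q_{ij}=-\bigl(f_{k+2}u_iu_j-f_{k+1}(u_iv_j+v_iu_j)+f_kv_iv_j\bigr)-D_k\,f^{(i,j)}_k\,[i\ne j]
\]
is positive semidefinite, where $u_i=f^{(i)}_k$ and $v_i=f^{(i)}_{k+1}$. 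Boundary points of the cube then follow by continuity.

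First I will encode everything in one bivariate generating function:
\[
F(z,w)=\prod_i\bigl(1-p_i+p_iz+w\,b_i(1-z)\bigr).
\]
Its coefficient of $w^0$ is $P$. The coefficient of $w^1$ is $(1-z)\sum_ib_iP/(1-p_i+p_iz)$, which encodes the $g_k$ up to the factor $(1-z)$. The coefficient of $w^2$ is $\tfrac12(1-z)^2\sum_{i\ne j}b_ib_jP/\bigl((\cdot)_i(\cdot)_j\bigr)$, which encodes the $h_k$. This is exactly the mechanism behind \eqref{eq:transport-derivatives}: $F(z,t)$ is the generating function of $S$ at the parameters $p+tb$.

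The key step is to show that the $w$-linear and $w$-quadratic data, restricted to the three coefficients $z^k,z^{k+1},z^{k+2}$, obey a Newton/Rayleigh-type inequality. Concretely, I would use the identity
\[
\frac{d^2}{dt^2}\log D_k(p+tb)\Big|_{t=0}
\]
together with the decomposition of each $f^{(i)}$ and $f^{(i,j)}$ obtained by conditioning on the removed coins:
\[
f_k=(1-p_i)(1-p_j)f^{(i,j)}_k+\bigl(p_i+p_j-2p_ip_j\bigr)f^{(i,j)}_{k-1}+p_ip_jf^{(i,j)}_{k-2}.
\]
This decomposition reduces each pairwise entry $Q_{ij}$ to a combination of $2\times2$ minors $f^{(i,j)}_{a}f^{(i,j)}_{c}-f^{(i,j)}_{a-1}f^{(i,j)}_{c+1}$. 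Log-concavity of the smaller Poisson-binomial laws makes these minors nonnegative. The aim is then to write $Q$ as a sum of rank-one or diagonally dominant pieces with such nonnegative weights.

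I expect the main obstacle to be the mixed-sign directions. When the $b_i$ have different signs, $F(z,w)$ is not obviously real stable, so a one-line stability argument is not available. If the direct PSD decomposition becomes unwieldy, I would instead fix $b$, minimize the left-minus-right difference over the single scalar $h_k$, and reduce the problem to showing that the $3\times3$ Hankel-type determinant built from $(f_k,f_{k+1},f_{k+2})$, $(g_k,g_{k+1})$ and $h_k$ has a fixed sign. That sign condition is what real-rootedness of $z\mapsto F(z,w)$ for small real $w$ would provide. Whichever route works, the inequality needs to be checked only for each fixed pair of coins, with the remaining coins folded into the log-concave law $f^{(i,j)}$, and this is where I expect most of the work to be.
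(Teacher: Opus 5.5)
The paper does not prove this proposition; it imports it from \cite{HJ17} as a black box. Your argument therefore has to stand on its own, and as written it does not. Your reformulation as $Q\succeq 0$ is correct, but it is only a restatement, and the step that would establish it is never carried out. Conditioning on coins $i,j$ can give signs of individual entries $Q_{ij}$ through $2\times 2$ minors of $f^{(i,j)}$. However, positive semidefiniteness is not an entrywise property, and it cannot be checked ``for each fixed pair of coins'': nonnegative $2\times2$ principal minors do not imply $Q\succeq0$. The promised decomposition into rank-one or diagonally dominant pieces is the whole content of the proposition, and it is missing precisely in the mixed-sign case you flag. Your fallback is circular. First, $h_k$ is determined by $b$ and is not a free scalar. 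Second, the condition $\det W\ge 0$ with
\[
W=\begin{pmatrix} f_k & f_{k+1} & g_k\\ f_{k+1} & f_{k+2} & g_{k+1}\\ g_k & g_{k+1} & h_k\end{pmatrix}
\]
expands exactly to \eqref{eq:HJ}, so nothing has been reduced. Real-rootedness of $z\mapsto F(z,w)$ does not supply the sign either. For each fixed $w$ it only gives $f_{k+1}(t)^2\ge f_k(t)f_{k+2}(t)$ along the path. That inequality is strict at $t=0$, so it says nothing about $g_k,g_{k+1},h_k$ there. (Also, with the factor $wb_i(1-z)$, $F(z,t)$ is the generating function at $p-tb$, not $p+tb$.)

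What is missing is a real-rootedness statement in three variables that couples the $w^0$, $w^1$ and $w^2$ layers. Let $G(x,y,w)=\prod_i\bigl((1-p_i)x+p_iy+b_iw\bigr)$. Its coefficients of $x^{n-j}y^j$, $x^{n-1-j}y^jw$ and $x^{n-2-j}y^jw^2$ are $f_j$, $g_j$ and $h_j/2$. Since $G$ is a product of real linear forms, each equal to $1$ at $e=(1,1,0)$, it is hyperbolic with respect to $e$ for every sign pattern of $b$. This is the property that survives when real stability fails. For $p\in(0,1)^n$, the vectors $(1,0,0)$ and $(0,1,0)$ lie in its hyperbolicity cone. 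By standard facts on hyperbolic polynomials (G\r{a}rding; Renegar), $\partial_x^{n-2-k}\partial_y^{k}G$ is then a quadratic form that is hyperbolic with respect to $e$ and positive at $e$, so it has exactly one positive eigenvalue. Up to the positive factor $(n-k-2)!\,k!/2$ and congruence by $\mathrm{diag}(n-k-1,k+1,1)$, its matrix is
\[
N=\begin{pmatrix}\bigl(1+\frac{1}{n-k-1}\bigr)f_k & f_{k+1} & g_k\\ f_{k+1} & \bigl(1+\frac{1}{k+1}\bigr)f_{k+2} & g_{k+1}\\ g_k & g_{k+1} & h_k\end{pmatrix}.
\]
Now $W=N-\mathrm{diag}\bigl(f_k/(n-k-1),\,f_{k+2}/(k+1),\,0\bigr)$ is obtained by subtracting a positive semidefinite matrix. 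By Weyl's inequality, $\lambda_2(W)\le\lambda_2(N)\le 0$, while $\lambda_1(W)\ge W_{11}=f_k>0$. Hence $\det W=\lambda_1\lambda_2\lambda_3\ge 0$, which is exactly \eqref{eq:HJ}. With this ingredient, your determinant reformulation becomes a short, self-contained proof independent of \cite{HJ17}. Without it, the proposal does not establish the proposition.
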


At an interior parameter point, a Poisson-binomial mass function is strictly log-concave on its support. In particular,
\begin{equation}\label{eq:slc}
f_{k+1}^2>f_kf_{k+2},\qquad 0\le k\le n-2.
\end{equation}
This follows from Newton's inequalities; see also \cite{HJ17}.

\section{Proof of the sharp range}

\subsection{All orders above one fail}

Fix $q>1$. Let
\[
p=\frac{1}{1+2^{q/(q-1)}}
\]
and consider
\[
p_1(t)=p+t,
\qquad
p_2(t)=p-t.
\]
For small $|t|$ this is an interior path. If $F_q(t)=\sum_{k=0}^2 f_k(t)^q$, then
\[
f_0(t)=(1-p)^2-t^2,
\quad
f_1(t)=2p(1-p)+2t^2,
\quad
f_2(t)=p^2-t^2.
\]
Hence $F_q'(0)=0$ and
\begin{align*}
F_q''(0)
&=-2q\Big((1-p)^{2q-2}+p^{2q-2}
 -2(2p(1-p))^{q-1}\Big)\\
&=-2q\,p^{2q-2}<0.
\end{align*}
The last identity is exactly the choice of $p$. Therefore
\[
T_q''(0)=-\frac{F_q''(0)}{q-1}>0,
\]
and, since $F_q'(0)=0$,
\[
H_q''(0)=\frac{F_q''(0)}{(1-q)F_q(0)}>0.
\]
Thus both entropy families fail joint concavity for every real $q>1$.

\subsection{A nonlinear telescoping correction below one}

Now fix $0<q<1$ and write
\[
s=1-q\in(0,1).
\]
The case $n=1$ is immediate from the strict concavity of $p^{1-s}+(1-p)^{1-s}$, so assume $n\ge2$. Let
\[
\Phi(t)=\sum_{k=0}^n f_k(t)^{1-s}.
\]
We prove $\Phi''(t)\le0$ for every affine path.

Using \eqref{eq:transport-derivatives} and summing twice by parts gives
\begin{align}
-\frac{\Phi''}{(1-s)s}
={}&\sum_{k=0}^n f_k^{-1-s}(g_{k-1}-g_k)^2 \notag\\
&-\frac1s\sum_{k=0}^{n-2}h_k
\left(f_k^{-s}-2f_{k+1}^{-s}+f_{k+2}^{-s}\right).
\label{eq:curvature}
\end{align}

For $a,b>0$ with $a\ne b$, define
\begin{equation}\label{eq:eta}
\eta_s(a,b)=\frac{1-s}{s}\,
\frac{(a^{-s}-b^{-s})^2}{b^{1-s}-a^{1-s}}.
\end{equation}
The expression has a continuous extension to the diagonal, and we set $\eta_s(a,a)=0$. It is antisymmetric and homogeneous of degree $-1-s$.

For a fixed $k$, abbreviate
\[
a=f_k,\quad b=f_{k+1},\quad c=f_{k+2},
\quad g=g_k,\quad \widetilde g=g_{k+1}.
\]
Define
\begin{align*}
u_k={}&a^{-1-s}g^2-2b^{-1-s}g\widetilde g+c^{-1-s}\widetilde g^2\\
&-\frac{h_k}{s}(a^{-s}-2b^{-s}+c^{-s}),
\end{align*}
and add the correction
\[
\widehat u_k=u_k-\eta_s(a,b)g^2+\eta_s(b,c)\widetilde g^2.
\]
The correction telescopes. A direct expansion gives
\begin{align}
-\frac{\Phi''}{(1-s)s}
={}&\sum_{k=0}^{n-2}\widehat u_k
+\big(f_1^{-1-s}+\eta_s(f_0,f_1)\big)g_0^2 \notag\\
&+\big(f_{n-1}^{-1-s}-\eta_s(f_{n-1},f_n)\big)g_{n-1}^2.
\label{eq:telescope}
\end{align}

The endpoint coefficients and the local matrix are governed by the same one-variable quantity, which motivates the following definition.

\begin{lemma}\label{lem:As}
For $x>0$, let
\[
L_s(x)=x^{-1-s}-\eta_s(x,1),
\qquad
A_s(x)=xL_s(x).
\]
Then $A_s(x)\ge1$, with equality only at $x=1$. Moreover
\begin{equation}\label{eq:riccati}
xA_s'(x)
=A_s(x)-x^{-s}
-sx^{-s}\frac{(A_s(x)-1)^2}{(1-x^{-s})^2},
\end{equation}
with the right-hand side interpreted by continuity at $x=1$, and
\begin{equation}\label{eq:reciprocal}
A_s(x)+A_s(x^{-1})
=2+\frac{x^{-s}+x^s-2}{s}.
\end{equation}
\end{lemma}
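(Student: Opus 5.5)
The plan is to first put $A_s$ into a closed form and then read all three claims off it. Unwinding the definition \eqref{eq:eta},
\[
A_s(x)=x^{-s}-\frac{1-s}{s}\,\frac{x\,(1-x^{-s})^2}{1-x^{1-s}} .
\]
One factor $(1-x^{-s})$ comes out of $A_s(x)-1$. The identity $x(1-x^{-s})=x-x^{1-s}$ then reduces the remaining bracket to a single fraction. I expect to arrive at
\[
A_s(x)-1=\frac{(1-x^{-s})\,N_s(x)}{s\,(x^{1-s}-1)},\qquad N_s(x)=(1-s)x+s-x^{1-s}.
\]
This algebra, including the signs, is the first thing to check carefully.

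\emph{The bound $A_s\ge1$.} For $x\neq1$ the factors $1-x^{-s}$ and $x^{1-s}-1$ have the same sign, since $0<s<1$. Moreover $N_s(x)>0$ for $x\ne1$: this is the weighted AM--GM inequality $x^{1-s}\cdot 1^{s}<(1-s)x+s\cdot1$, i.e. strict concavity of $t\mapsto t^{1-s}$. Hence $A_s(x)>1$ for $x\neq1$. Near $x=1$ one has $1-x^{-s}\sim s(x-1)$, $x^{1-s}-1\sim(1-s)(x-1)$ and $N_s(x)\sim \tfrac{s(1-s)}{2}(x-1)^2$, so $A_s(x)\to1$. This gives equality exactly at $x=1$, and it also gives the continuous extension used in \eqref{eq:riccati}.

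\emph{The reciprocal relation \eqref{eq:reciprocal}.} I would not use the closed form here. Instead I would use the two structural properties of $\eta_s$ recorded after \eqref{eq:eta}, namely antisymmetry and homogeneity of degree $-1-s$. Together they give
\[
\eta_s(x^{-1},1)=x^{1+s}\eta_s(1,x)=-x^{1+s}\eta_s(x,1),
\]
and therefore $A_s(x^{-1})=x^{s}+x^{s}\eta_s(x,1)$. Adding this to $A_s(x)=x^{-s}-x\eta_s(x,1)$ leaves the cross term $\eta_s(x,1)(x^s-x)$. Because $x^s-x=x^s(1-x^{1-s})$, this term collapses to $\frac{1-s}{s}(x^{-s}+x^s-2)$. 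Adding $x^{-s}+x^s$ then yields $2+\frac{x^{-s}+x^s-2}{s}$.

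\emph{The Riccati identity \eqref{eq:riccati}.} This is the step I expect to be the main computational obstacle. From the closed form, the nonlinear term is
\[
s x^{-s}\frac{(A_s-1)^2}{(1-x^{-s})^2}=\frac{x^{-s}N_s(x)^2}{s\,(x^{1-s}-1)^2},
\]
so the quadratic piece is explicit and has the same denominator as $xA_s'$. I would substitute $y=x^{s}$, or simply clear the common denominator $s(x^{1-s}-1)^2$. Then $xA_s'(x)$ and $A_s-x^{-s}$ are computed directly from $A_s=x^{-s}-\frac{1-s}{s}\frac{(x-x^{1-s})(1-x^{-s})}{1-x^{1-s}}$. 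The claim becomes a polynomial identity in the monomials $x^{a}$ with $a\in\{1,1-s,-s,2-2s,\dots\}$, which I would verify by collecting coefficients. As a consistency check, both sides should vanish to the correct order at $x=1$. At $x=1$ the right-hand side of \eqref{eq:riccati} should be read through the limit of $(A_s-1)/(1-x^{-s})$, which equals $\lim N_s/(s(x^{1-s}-1))=0$.
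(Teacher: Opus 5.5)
Your proposal is correct and follows essentially the paper's route. Your factorization $A_s(x)-1=(1-x^{-s})N_s(x)/\bigl(s(x^{1-s}-1)\bigr)$ is the paper's closed form \eqref{eq:Aformula} rearranged, with weighted AM--GM for $N_s>0$ replacing the paper's mean-value-theorem sign argument, and both obtain the Riccati identity by direct differentiation (the computation does close). The only real variation is that you derive \eqref{eq:reciprocal} from the antisymmetry and degree $-1-s$ homogeneity of $\eta_s$ rather than by substituting $x^{-1}$ into the closed form, which is equivalent and slightly cleaner.
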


\begin{proof}
From \eqref{eq:eta},
\begin{equation}\label{eq:Aformula}
A_s(x)
=1+\frac{x^{-s}-1}{s}
\left(1-\frac{(1-s)(x-1)}{x^{1-s}-1}\right).
\end{equation}
By the mean value theorem,
\[
\frac{(1-s)(x-1)}{x^{1-s}-1}=\xi^s
\]
for some $\xi$ between $1$ and $x$. The two factors after the leading $1$ in \eqref{eq:Aformula} have the same sign, proving $A_s\ge1$ and the equality statement. Differentiating \eqref{eq:Aformula} and simplifying gives \eqref{eq:riccati}. Replacing $x$ by $x^{-1}$ in \eqref{eq:Aformula} and adding gives \eqref{eq:reciprocal}.
\end{proof}

By homogeneity, the boundary coefficients in \eqref{eq:telescope} are positive multiples of $L_s(x)$, hence are strictly positive by Lemma \ref{lem:As}. It remains to control each $\widehat u_k$.

Since $b^2>ac$, the arithmetic--geometric mean inequality gives
\[
a^{-s}-2b^{-s}+c^{-s}>0.
\]
In the curvature formula, the $h_k$ term is multiplied by the negative coefficient $-\bigl(a^{-s}-2b^{-s}+c^{-s}\bigr)/s$. Multiplying \eqref{eq:HJ} by this coefficient therefore reverses the inequality and yields
\begin{equation}\label{eq:local-matrix}
\widehat u_k\ge
\begin{pmatrix}g&\widetilde g\end{pmatrix}
M
\begin{pmatrix}g\\\widetilde g\end{pmatrix},
\end{equation}
where
\[
M=\begin{pmatrix}
a^{-1-s}-\eta_s(a,b)+cK & -b^{-1-s}-bK\\
-b^{-1-s}-bK & c^{-1-s}+\eta_s(b,c)+aK
\end{pmatrix}
\]
and
\[
K=\frac{a^{-s}-2b^{-s}+c^{-s}}{s(b^2-ac)}>0.
\]
After dividing by the positive factor $b^{-1-s}$ and replacing $a/b,c/b$ by $a,c$, it remains to prove the following local statement.

\begin{lemma}\label{lem:matrix}
For $0<s<1$ and $a,c>0$ with $ac<1$, set
\[
K=\frac{a^{-s}+c^{-s}-2}{s(1-ac)}.
\]
Then
\begin{equation}\label{eq:Msc}
M_s(a,c)=
\begin{pmatrix}
L_s(a)+cK&-1-K\\
-1-K&L_s(c)+aK
\end{pmatrix}
\end{equation}
is positive definite.
\end{lemma}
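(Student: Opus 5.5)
Positive definiteness of the symmetric $2\times2$ matrix $M_s(a,c)$ in \eqref{eq:Msc} will follow from positivity of one diagonal entry together with positivity of the determinant. For the diagonal, Lemma~\ref{lem:As} gives $L_s(a)=A_s(a)/a>0$. We also have $K>0$: since $ac<1$, AM--GM gives $a^{-s}+c^{-s}\ge 2(ac)^{-s/2}>2$. Hence $L_s(a)+cK>0$.

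The determinant is the real task. Writing $L_s(a)=A_s(a)/a$ and $L_s(c)=A_s(c)/c$ and multiplying by $ac$, the condition becomes
\[
D(a,c):=\bigl(A_s(a)+acK\bigr)\bigl(A_s(c)+acK\bigr)-ac(1+K)^2>0 .
\]
Expanding, the $a^2c^2K^2$ terms cancel against part of $acK^2$, and $D$ takes the form
\[
A_s(a)A_s(c)-ac+acK\bigl(A_s(a)+A_s(c)-2\bigr)-acK^2(1-ac).
\]
Using $K(1-ac)=(a^{-s}+c^{-s}-2)/s$, the last term becomes $acK\,(a^{-s}+c^{-s}-2)/s$. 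Therefore
\[
D=A_s(a)A_s(c)-ac+acK\Bigl(A_s(a)+A_s(c)-2-\tfrac{a^{-s}+c^{-s}-2}{s}\Bigr).
\]
On the boundary $ac=1$ the reciprocal relation \eqref{eq:reciprocal} makes the bracket vanish exactly. This is the cancellation the reciprocal relation was designed for, and it is why $K$, which blows up there, still leaves $D$ controlled. So I would first analyze the limit $c\to 1/a$. There $D\to A_s(a)A_s(1/a)-1$ plus a finite limit of $K$ times the bracket, computed by differentiating in $c$ along $c=1/a$. I would show this limit is $\ge0$, and that $D$ is strictly positive just inside the region $ac<1$, by a first-order expansion in $1-ac$. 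Along the ray this expansion uses the derivative $A_s'$, which the Riccati identity \eqref{eq:riccati} supplies.

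Next I would run the zero-crossing argument. Fix $a$ and move $c$ (or move along a one-parameter family such as $c=\lambda/a$ with $\lambda\in(0,1]$) from the boundary $\lambda=1$ into the interior. Suppose $D$ vanished at some interior point, and take the first such point. At a zero, the relation $D=0$ lets one express $K$, or equivalently $(1+K)^2$, in terms of $A_s(a)$, $A_s(c)$, $a$ and $c$. Substituting this and \eqref{eq:riccati} for $A_s'(a)$ and $A_s'(c)$ into $\partial_\lambda D$ should yield an expression of fixed sign. That sign would force $D$ to cross zero only from negative to positive as we move inward. Combined with $D>0$ just inside the boundary, this rules out any interior zero. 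It remains to check the other degenerate limits, $a\to0$ or $c\to0$ with $ac<1$. There $A_s(x)\sim x^{-s}/s$ dominates, so $D>0$ directly.

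The main obstacle is this crossing-sign computation. After eliminating $K$ via $D=0$, one must verify that $\partial_\lambda D$ factors into manifestly signed pieces, using $A_s\ge1$ and the Riccati term $-sx^{-s}(A_s-1)^2/(1-x^{-s})^2$. I would try first to complete a square in the Riccati form, treating $(A_s(x)-1)/(1-x^{-s})$ as the natural variable. Since $A_s(1)=1$, the case $a=c=1$ (excluded because $ac<1$, but arising as a limit) also needs a separate local check.
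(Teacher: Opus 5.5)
Your outline follows the paper's architecture. You rewrite $\det M_s$ through $A_s$, let the reciprocal relation cancel the pole of $K$ on $ac=1$, and use the Riccati identity for a first-order expansion there. You then exclude interior zeros by a one-sided crossing rule; fixing $a$ and moving $c$ is equivalent to the paper's fixing $c$ and moving $a$, since $D$ is symmetric in $a,c$. Your expansion of $D$ and the vanishing of the bracket on $ac=1$ are correct.

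\textbf{The gap is the crossing sign.} You flag it as the main obstacle but only assert it, and the source of sign you point to ($A_s\ge1$) is not what makes it work. Take $D=\det M_s$ with $c$ fixed, and set $x=a^{-s}$, $y=c^{-s}$, $z=ac$, $d=s(1-z)$, $E=(1-x)^2+d$ and $u=(A_s(a)+zK)/(1+K)$. Eliminating $A_s(c)$ through $D=0$ and inserting \eqref{eq:riccati} gives the exact identity
\[
\frac{u(1-z)}{1+K}\,\frac{\partial D}{\partial\log a}
=-\frac{xE}{(1-x)^2}\Bigl[(u-1)+\frac{dK}{E}(u-z)\Bigr]^2
+\frac{K(1+d-xy)}{E}\,(u-z)^2 .
\]
The first term is the completed square you anticipated. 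The second is negative because $xy=z^{-s}>1+s(1-z)=1+d$, which is strict convexity of $t\mapsto t^{-s}$. The two terms cannot vanish together, since $u=z$ forces $u-1\ne0$. That inequality is the missing ingredient. It also handles the removable case $x=1$, where the Riccati term is $0/0$: at a zero, $\partial D/\partial\log a$ is a positive multiple of $K(1-K)$, and $K=(z^{-s}-1)/(s(1-z))>1$.

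\textbf{The boundary step is also only announced,} and it is more delicate than ``the limit is $\ge0$''. The limit is exactly $0$: the reciprocal relation plus Riccati reduce it to $-(A_s(r)-1)^2+(A_s(r)-1)^2$. So everything rests on the linear coefficient, which along $(re^{-\delta},r^{-1})$ is $\frac{1}{2s}\bigl[(1+s)-(1-s)\frac{r-r^{-s}}{r^{1-s}-1}\bigr]^2$. This is nonzero for $r\ne1$ because $\sinh t/t$ is strictly increasing.

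The coefficient vanishes at $r=1$, so the path through $(1,1)$ has no boundary foothold. Rather than a local expansion at $(1,1)$, the paper gets $D\ge0$ on that path by continuity from neighbouring paths. It then uses the strict crossing rule to rule out a zero touching from above.

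Two smaller points. Your check of the limits $a\to0$ and $c\to0$ is unnecessary, since a crossing argument started at the boundary already covers the whole segment. The asymptotic you quote is also off: $A_s(x)\sim x^{-s}$ as $x\to0$, not $x^{-s}/s$.
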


\begin{proof}
The diagonal entries are positive because $L_s>0$ and $K>0$. Thus positive definiteness is equivalent to positivity of the determinant.

Write $z=ac$ and
\begin{equation}\label{eq:D}
D(a,c)=\det M_s(a,c)
=\frac{(A_s(a)+zK)(A_s(c)+zK)}{z}-(1+K)^2.
\end{equation}
The determinant tends to zero as $z\uparrow1$. To determine its sign on the interior side of this boundary, fix $r>0$, $r\ne1$, and move inward along $(a,c)=(re^{-\delta},r^{-1})$ with $\delta>0$. Using \eqref{eq:riccati} and \eqref{eq:reciprocal}, a first-order expansion as $\delta\downarrow0$ gives
\begin{equation}\label{eq:boundary-expansion}
D(re^{-\delta},r^{-1})
=\frac{\delta}{2s}
\left[(1+s)-(1-s)\frac{r-r^{-s}}{r^{1-s}-1}\right]^2
+O_{r,s}(\delta^2).
\end{equation}
The square is nonzero. Indeed, if $u=\log r$, then
\[
\frac{r-r^{-s}}{r^{1-s}-1}
=\frac{\sinh((1+s)u/2)}{\sinh((1-s)u/2)},
\]
and $\sinh t/t$ is strictly increasing on $(0,\infty)$. Hence the coefficient of $\delta$ in \eqref{eq:boundary-expansion} is strictly positive. For fixed $r\ne1$ and $s\in(0,1)$, the remainder is bounded in absolute value by $C_{r,s}\delta^2$ for all sufficiently small $\delta$. The positive linear term therefore dominates, and $D>0$ just inside every boundary point $ac=1$ with $a\ne c$.

We next show that an interior zero can only be crossed in the opposite direction. Suppose $D(a,c)=0$. Put
\[
x=a^{-s},\qquad y=c^{-s},\qquad d=s(1-z),
\]
and
\[
u=\frac{A_s(a)+zK}{1+K}>0.
\]
The equation $D=0$ implies
\[
\frac{A_s(c)+zK}{1+K}=\frac{z}{u}.
\]
If $x\ne1$, differentiation of \eqref{eq:D} with respect to $\log a$, followed by \eqref{eq:riccati}, gives
\begin{align}
\frac{u(1-z)}{1+K}\,\frac{\partial D}{\partial\log a}
={}&-\frac{xE}{(1-x)^2}
\left[(u-1)+\frac{dK}{E}(u-z)\right]^2 \notag\\
&+\frac{K(1+d-xy)}{E}(u-z)^2,
\label{eq:crossing}
\end{align}
where $E=(1-x)^2+d$. Since $xy=z^{-s}$ and $0<z<1$, strict convexity of $t\mapsto t^{-s}$ at $1$ gives
\[
xy=z^{-s}>1+s(1-z)=1+d.
\]
Thus both terms on the right of \eqref{eq:crossing} are nonpositive. They cannot vanish simultaneously: if $u=z$, then $u-1=z-1\ne0$. Hence
\[
\frac{\partial D}{\partial\log a}<0.
\]
If $x=1$, the removable case can be evaluated directly. Then
\[
K=\frac{z^{-s}-1}{s(1-z)}>1
\]
and at a zero of $D$,
\[
\frac{u}{1+K}\frac{\partial D}{\partial\log a}
=\frac{(1-z)K(1-K)}{(1+K)^2}<0.
\]

Fix $c\ne1$ and vary $a\in(0,1/c)$. By \eqref{eq:boundary-expansion}, the determinant is positive near the right endpoint. If it were nonpositive somewhere, continuity would force a zero at which $D$ passes from nonpositive values to positive values as $a$ increases. The derivative at such a crossing must be nonnegative, contradicting the strict crossing rule above. Hence $D>0$ whenever $c\ne1$.

For $c=1$, continuity from nearby values of $c$ first gives $D(a,1)\ge0$. Equality at an interior point would be impossible: the strict derivative rule would make $D$ negative immediately on one side of that point. Therefore $D(a,1)>0$ as well, and $M_s(a,c)$ is positive definite throughout $ac<1$.
\end{proof}

Lemma \ref{lem:matrix} and \eqref{eq:local-matrix} show $\widehat u_k\ge0$ for every $k$. Together with the positive boundary terms in \eqref{eq:telescope}, this gives $\Phi''\le0$. Hence $\Phi$ is jointly concave on the interior parameter cube, and continuity extends the result to $[0,1]^n$.

The concavity is strict in the interior. Let $(b_i)$ be a nonzero direction. If the sequence $g$ is not identically zero, then either an endpoint coefficient in \eqref{eq:telescope} is multiplied by a nonzero endpoint value, or some local pair $(g_k,g_{k+1})$ is nonzero. In either case, the strict positivity of the endpoint coefficient or the positive definiteness from Lemma \ref{lem:matrix} gives $-\Phi''>0$.

It remains to consider the possible kernel $g\equiv0$. In this case $g_0=f_0\sum_i b_i/(1-p_i)=0$ and
\[
h_0=f_0\left[\left(\sum_i\frac{b_i}{1-p_i}\right)^2-\sum_i\frac{b_i^2}{(1-p_i)^2}\right]<0.
\]
Proposition \ref{prop:HJ} also gives $h_k\le0$ for every $k$. Since $a^{-s}-2b^{-s}+c^{-s}>0$ for each consecutive triple of probabilities, the $k=0$ local term is then strictly positive. Thus $\Phi''<0$ in every nonzero direction at every point of $(0,1)^n$.

Since
\[
T_q=\frac{\Phi-1}{1-q},
\]
Tsallis entropy is jointly concave and strictly concave on the open cube. Also $\log$ is increasing and concave on $(0,\infty)$, so
\[
H_q=\frac{\log\Phi}{1-q}
\]
is jointly concave, and strict concavity of $\Phi$ gives strict concavity on the open cube. This proves Theorem \ref{thm:main} for $0<q<1$.

\section{Discussion}

The generalized Shepp--Olkin problem has a different phase boundary from the one suggested by the first known counterexamples. The obstruction above order one is transverse: it varies two Bernoulli parameters in opposite directions and is already visible for two coins. This explains why one-parameter or monotone tests do not locate the correct threshold.

Below order one, the proof keeps the transport inequality from the Shannon theory but changes the local algebra. The useful correction is not a linear perturbation of the Shannon argument. Its role is to redistribute neighboring quadratic terms until the Hillion--Johnson bound becomes a positive two-by-two form. The function $A_s$ then exposes an exact Riccati structure. The boundary expansion and the one-sided zero-crossing rule are the two parts that close the argument for the whole interval $0<s<1$.

The theorem concerns Bernoulli sums and joint parameter concavity. A natural next question is whether the correction--Riccati mechanism extends to other ultra-log-concave convolution families. Another direction is quantitative stability: one may ask for explicit lower bounds on the negative Hessian in terms of the parameter vector and the interpolation direction.

\section*{Acknowledgements}
The author used OpenAI's GPT-5.6 Sol extensively as a research assistant for exploratory calculations, symbolic verification, literature searches, and drafting. AI-assisted exploration first identified the order $q=1/2$ as a tractable case. Under the author's guidance and structural observations, this was extended to the correction--Riccati argument for the full interval $0<q<1$. The author independently verified the mathematical arguments and references and takes full responsibility for the paper.

\appendix
\section{Algebraic details for the local determinant}\label{app:algebra}

For completeness, we record two calculations used in Lemma \ref{lem:matrix}.

First, starting from \eqref{eq:Aformula}, differentiation gives
\[
xA_s'(x)-A_s(x)+x^{-s}
=-sx^{-s}\frac{(A_s(x)-1)^2}{(1-x^{-s})^2}.
\]
This is \eqref{eq:riccati}. The identity remains valid at $x=1$ by continuity, where $A_s(1)=1$ and $A_s'(1)=0$.

Second, consider the inward curve $(a,c)=(re^{-\delta},r^{-1})$ with fixed $r\ne1$. Along this curve $z=e^{-\delta}$, and the quantity $K$ has a simple pole of order $1/\delta$ as $\delta\downarrow0$. Rewriting
\[
D=\frac{A_s(a)A_s(c)}{z}-1+K\bigl(A_s(a)+A_s(c)-2\bigr)+(z-1)K^2
\]
makes the cancellations visible. The reciprocal identity cancels the singular $1/\delta$ contribution, while the Riccati identity cancels the constant term. The first nonzero coefficient is therefore linear in $\delta$ and reduces to
\[
\frac{1}{2s}
\left[(1+s)-(1-s)\frac{r-r^{-s}}{r^{1-s}-1}\right]^2.
\]
The zero-crossing calculation is obtained by differentiating \eqref{eq:D} at $D=0$. If
\[
x=a^{-s},\quad y=c^{-s},\quad z=ac,\quad d=s(1-z),
\quad E=(1-x)^2+d,
\]
and
\[
u=\frac{A_s(a)+zK}{1+K},
\]
then elimination of $A_s(c)$ by the equation $D=0$ yields exactly \eqref{eq:crossing}. No inequality is used before the last step, where $z^{-s}>1+s(1-z)$ supplies the strict sign.

\section{Boundary parameters}

The proof above was written for $p_i\in(0,1)$. For fixed $q\in(0,1)$, the map
\[
(p_1,\ldots,p_n)\mapsto \sum_k \Pp(S=k)^q
\]
is continuous on the closed cube. Joint concavity on the open cube therefore extends to the closure by approximating any two boundary points by interior points and passing to the limit in the concavity inequality. The same argument applies to R\'enyi and Tsallis entropy because the power sum is positive and both outer functions are continuous on its range.


\begin{thebibliography}{MMR23}

\bibitem[HJ16]{HJ16}
E. Hillion and O. T. Johnson.
Discrete versions of the transport equation and the Shepp--Olkin conjecture.
\emph{Ann. Probab.} 44 (2016), no. 1, 276--306.

\bibitem[HJ17]{HJ17}
E. Hillion and O. T. Johnson.
A proof of the Shepp--Olkin entropy concavity conjecture.
\emph{Bernoulli} 23 (2017), no. 4B, 3638--3649.


\bibitem[HJ19]{HJ19}
E. Hillion and O. T. Johnson.
\newblock A proof of the Shepp--Olkin entropy monotonicity conjecture.
\newblock \emph{Electron. J. Probab.} 24 (2019), paper no. 126, 1--14.


\bibitem[MMR23]{MMR23}
M. Madiman, J. Melbourne and C. Roberto.
Bernoulli sums and R\'enyi entropy inequalities.
\emph{Bernoulli} 29 (2023), no. 2, 1578--1599.

\bibitem[Ren61]{Ren61}
A. R\'enyi.
On measures of entropy and information.
In \emph{Proc. Fourth Berkeley Symp. Math. Statist. Probab.}, Vol. I, 547--561, 1961.

\bibitem[SO81]{SO81}
L. A. Shepp and I. Olkin.
Entropy of the sum of independent Bernoulli random variables and of the multinomial distribution.
In \emph{Contributions to Probability}, 201--206, Academic Press, 1981.

\bibitem[Tsa88]{Tsa88}
C. Tsallis.
Possible generalization of Boltzmann--Gibbs statistics.
\emph{J. Stat. Phys.} 52 (1988), 479--487.

\end{thebibliography}
\end{document}